\newcommand{\bb}{\mathbb}
\newtheorem{theorem}{Theorem}[section]
\newtheorem{lemma}[theorem]{Lemma}
\newtheorem{corollary}[theorem]{Corollary}
\newtheorem{claim}[theorem]{Claim}
\newtheorem{definition}[theorem]{Definition}
\title{\Large{Concentration of Submodular Functions and Read-$k$ Families Under Negative Dependence}}
\date{}
\author{\normalsize{Sharmila Duppala, George Z. Li, Juan Luque, Aravind Srinivasan, and Renata Valieva}}
\date{\normalsize{University of Maryland, College Park}}
\begin{document}

\maketitle
\begin{abstract}
    We study the question of whether submodular functions of random variables satisfying various notions of negative dependence satisfy Chernoff-like concentration inequalities. We prove such a concentration inequality for the lower tail when the random variables satisfy negative association or negative regression, partially resolving an open problem raised in (\citet{approx/QiuS22}). Previous work showed such concentration results for random variables that come from specific dependent-rounding algorithms (\citet{focs/ChekuriVZ10,soda/HarveyO14}). We discuss some applications of our results to combinatorial optimization and beyond. We also show applications to the concentration of read-$k$ families \cite{rsa/GavinskyLSS15} under certain forms of negative dependence; we further show a simplified proof of the entropy-method approach of \cite{rsa/GavinskyLSS15}. 
\end{abstract}
\section{Introduction}

Concentration inequalities are ubiquitous in discrete mathematics and theoretical computer science~\cite{probmethod,dubhashipanconesi}. The most canonical examples are the Chernoff-Hoeffding bounds, which show strong concentration for linear combinations of independent random variables~\cite{chernoffbound,hoeffdingbound}. In some applications, the condition of independence is too restrictive, so weaker notions have been considered~\cite{azuma,siamdm/SchmidtSS95,approx/Skorski22}. Of interest to us is the setting where the random variables are negatively correlated, which arises naturally, for example, in designing approximation algorithms by solving a linear or semidefinite program and applying some dependent randomized rounding algorithm~\cite{jacm/GandhiKPS06}. For this setting, \citet{siamcomp/PanconesiS97} showed that the Chernoff-Hoeffding bounds can be shown under the weak notion of \emph{negative cylinder dependence}: this and other standard notions of negative dependence are defined in Section~\ref{sec:neg-correl-notions}. 

For some applications in combinatorial optimization, algorithmic game theory, and machine learning, one needs to consider the more general class of \emph{submodular functions} $f$ of the random variables, rather than simple linear combinations. When the binary random variables $X_1,\ldots,X_n$ are independent, it was shown that $f(X_1,\ldots,X_n)$ still satisfies Chernoff bounds exactly~\cite{focs/ChekuriVZ10}. When there is dependence between the random variables, the results are much weaker. The only known results are for random variables that are output by specific dependent-rounding algorithms, known as swap rounding and pipage rounding~\cite{focs/ChekuriVZ10,soda/HarveyO14}. These results showed that a Chernoff-like lower-tail bound also holds for submodular functions for their specific dependent rounding procedure. As noted in the work of~\citet{garbe2018concentration}, it is not clear how to generalize either of these proofs to any general notion of negative dependence. 

We introduce a new notion of negative dependence, called 1-negative association, which is weaker than negative association and negative regression but stronger than negative cylinder dependence.
\begin{definition}
    A collection of random variables $X_1,\ldots,X_n$ is said to satisfy 1-negative association if for any two monotone functions $f$ and $g$, where $g$ depends on a single random variable $X_i$ and $f$ depends on the remaining random variables $\{X_j\}_{j\in[n]\backslash\{i\}}$, we have $\mathbb{E}[fg]\le \mathbb{E}[f]\mathbb{E}[g]$.
\end{definition} 
Importantly, while in general it is weaker than the notion of weak negative regression introduced by \citet{approx/QiuS22}, 1-negative association is equivalent to it when the variables $X_1,\ldots,X_n$ are binary. Further details are provided in Section \ref{sec:1-na}.

Our main result is that the Chernoff-like bound shown in \citet{focs/ChekuriVZ10,soda/HarveyO14} also hold under 1-negative association (see Section \ref{sec:sub}). In particular, this implies the following:
\begin{theorem}\label{thm:intro}
    Let $X_1,\ldots,X_n$ be binary random variables with mean $x_1,\ldots,x_n$ satisfying negative association (or negative regression). Let $f$ be a non-negative monotone submodular function with marginal values in $[0,1]$ and let $F$ be the multilinear extension of $f$. If we let $\mu_0=F(x_1,\ldots,x_n)$, then we have the following:
    $$\Pr[f(X_1,\ldots,X_n)\le (1-\delta)\cdot \mu_0]\le\exp(-\mu_0\delta^2/2).$$
\end{theorem}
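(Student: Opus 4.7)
The plan is to bound the MGF $M(\lambda) := \E[\exp(-\lambda f(X))]$ by its value under independent Bernoulli rounding with the same marginals, then invoke the submodular Chernoff bound of \cite{focs/ChekuriVZ10} for the independent case, and conclude by Markov's inequality together with the standard optimization in $\lambda$. The key structural observation I plan to use is that, for any $\lambda > 0$, the set function $h(y) := \exp(-\lambda f(y))$ on $\{0,1\}^n$ is coordinate-wise decreasing and supermodular. Decreasingness is immediate from monotonicity of $f$. For supermodularity, one combines the submodularity inequality $f(S\cup T) + f(S\cap T) \le f(S) + f(T)$ with the monotonicity constraints $f(S\cap T) \le f(S), f(T) \le f(S\cup T)$ and convexity of $u \mapsto \exp(-\lambda u)$, via a short mean-value-theorem calculation applied to the two pairs of values.

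The central step will be the comparison
\begin{equation}\label{eq:plan-mgf-cmp}
\E[h(X)] \le H(x),
\end{equation}
where $H$ denotes the multilinear extension of $h$; equivalently $H(x) = \E_{\mathrm{ind}}[h(Y)]$ for independent Bernoullis $Y$ with the same marginals $x$. I plan to prove \eqref{eq:plan-mgf-cmp} by induction on $n$, replacing one coordinate at a time with an independent copy. Let $Y_1 \sim \mathrm{Bern}(x_1)$ be independent of $X$; writing $X_{-1} = (X_2, \ldots, X_n)$ and using the tower property yields
\[
\E[h(X_1, X_{-1})] - \E[h(Y_1, X_{-1})] = \Cov\bigl(X_1,\, h(1, X_{-1}) - h(0, X_{-1})\bigr).
\]
Since $h$ is supermodular, the increment $\Delta_1(y) := h(1, y) - h(0, y)$ is monotone non-decreasing in $y \in \{0,1\}^{n-1}$ (the increasing-differences characterization of supermodularity). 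Applying the 1-NA hypothesis, with the role of $g$ played by $X_1$ and that of $f$ by $\Delta_1$, forces the displayed covariance to be $\le 0$, so $\E[h(X)] \le \E[h(Y_1, X_{-1})]$. Now $\tilde h(y) := \E_{Y_1}[h(Y_1, y)]$ is a convex combination of the restrictions $h(0, \cdot)$ and $h(1, \cdot)$, both of which are supermodular and decreasing, so $\tilde h$ inherits both properties; and $X_{-1}$ inherits 1-NA from $X$. Applying the inductive hypothesis to $\tilde h$ and $X_{-1}$ then gives $\E[h(Y_1, X_{-1})] = \E[\tilde h(X_{-1})] \le \tilde H(x_{-1}) = H(x)$, closing the induction.

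To finish, since $Y$ is independent, the MGF bound of \cite{focs/ChekuriVZ10} for independent rounding gives $H(x) \le \exp(-(1-e^{-\lambda}) F(x)) = \exp(-(1-e^{-\lambda})\mu_0)$, and combining with \eqref{eq:plan-mgf-cmp} yields $M(\lambda) \le \exp(-(1-e^{-\lambda})\mu_0)$. Markov's inequality then gives $\Pr[f(X) \le (1-\delta)\mu_0] \le \exp\bigl(\mu_0[\lambda(1-\delta) - (1-e^{-\lambda})]\bigr)$, and the choice $\lambda = \ln(1/(1-\delta))$ together with the standard estimate $-(1-\delta)\ln(1-\delta) - \delta \le -\delta^2/2$ produces the claimed $\exp(-\mu_0 \delta^2/2)$. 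The hard part is \eqref{eq:plan-mgf-cmp}: its proof hinges on a precise structural match between the 1-NA hypothesis (which controls only covariances of a single variable against a monotone function of the others) and the quantity produced by coordinate-wise replacement (a covariance of $X_i$ against $\Delta_i$, which is monotone precisely because $h = \exp(-\lambda f)$ is supermodular). This alignment is what lets the comparatively weak 1-NA condition suffice, and it is the crux of the generalization over previous swap- and pipage-rounding arguments.
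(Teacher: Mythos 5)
Your proof is correct and follows essentially the same approach as the paper: a coordinate-wise replacement argument showing that the negative-exponential moment $\E[\exp(-\lambda f(X))]$ under the dependent distribution is dominated by the independent-marginals version, followed by the Chekuri--Vondr\'ak--Zenklusen MGF bound and the standard Chernoff optimization in $\lambda$. The only presentational difference is in how the crucial monotonicity of the coordinate increment $\Delta_1 = \exp(-\lambda f(1,\cdot))-\exp(-\lambda f(0,\cdot))$ is established: you deduce it by observing that $\exp(-\lambda f)$ is supermodular (via convexity of $u\mapsto e^{-\lambda u}$ together with monotonicity and submodularity of $f$, i.e.\ a majorization argument), whereas the paper proves the same increasing-differences property directly by factoring the increment as the product of a non-negative non-increasing function and a non-positive non-decreasing function.
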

A few remarks are in order. First, we highlight that the concentration in the above theorem is with respect to the value of the multilinear extension $F(x_1,\ldots,x_n)$, rather than the true expected value $\mathbb{E}[f(X_1,\ldots,X_n)]$. In general, the true expected value can be greater than the value of multilinear extension \cite{approx/QiuS22}. Nevertheless, this suffices for applications relating to submodular maximization, and is the same type of concentration result shown in previous work.
Second, recall that negative cylinder dependence does not suffice to show this concentration bound~\cite[p.~583]{focs/ChekuriVZ10}. As a result, our results are, in some informal sense, almost tight in terms of the condition on negative dependence. 

In addition to providing submodular concentration results for a wide class of rounding algorithms and distributions, our results also give a new path toward understanding why pipage rounding and swap rounding satisfy the lower-tail Chernoff bound. By proving that the rounding algorithms output random variables which are 1-negatively associated, we immediately obtain a new proof of the lower tail bounds. This can be viewed as evidence that the two rounding algorithms satisfy 1-negative association or even negative association/regression. We leave this as an interesting open question.

\paragraph{Techniques.} We use the standard method of bounding the exponential moments for lower-tail Chernoff bounds. Our idea is to show that the exponential moments for our negatively-correlated random variables is upper bounded by the exponential moments for independent copies of the random variables. Formally, let $X_1,\ldots,X_n$ be random variables satisfying 1-negative association and let $X_1^*,\ldots,X_n^*$ be independent copies of the random variables. We show for any $\lambda< 0$, we have
$$\mathbb{E}[\exp(\lambda\cdot f(X_1,\ldots,X_n))]\le \mathbb{E}[\exp(\lambda\cdot f(X_1^*,\ldots,X_n^*))].$$
Since the exponential-moments method has been used to prove Chernoff bounds for submodular functions in the independent case~\cite{focs/ChekuriVZ10}, we can then repeat their proof and conclude with our desired result. 
We believe this proof idea may be of independent interest. For example, the same ideas can show that for a supermodular function $g$ and any $\lambda>0$, we have
$$\mathbb{E}[\exp(\lambda\cdot g(X_1,\ldots,X_n))]\le \mathbb{E}[\exp(\lambda\cdot g(X_1^*,\ldots,X_n^*))].$$
In other words, we have morally proven the following statement: any upper-tail concentration bound which can be proven for a supermodular function $g$ under independence based on the exponential-moments method also holds when the underlying random variables are negatively associated. As an example, we can apply this to a read-$k$ family of supermodular functions $g_1,\ldots,g_r$ for negatively associated random variables~\cite{rsa/GavinskyLSS15}. A read-$k$ family is defined as a set of functions where each variable appears in at most $k$ functions. This concept is particularly useful in scenarios where functions need to model or manage overlapping sets of variables with constraints on their interaction. We highlight that the proof of concentration for read-$k$ families given in~\citet{rsa/GavinskyLSS15} doesn't use the exponential-moments method, but instead it is based on the entropy method. We address this by giving a simpler proof of their results, this time using the exponential moments method. This gives the first concentration results for a class of supermodular functions under negatively correlated random variables, and is detailed in Section \ref{sec:super}.

\paragraph{Applications.}
Our motivation for studying the problem comes from the randomized-rounding paradigm in approximation algorithms for converting a fractional solution to a linear program into an integral one. In many such randomized-rounding schemes, the output random variables have been shown to satisfy strong negative dependence properties, such as negative association \cite{Srinivasan2001DistributionsOL,jacm/GandhiKPS06}. For all such rounding algorithms, our results immediately imply the submodular Chernoff lower-tail bound. It remains an interesting open question to efficiently sample negatively dependent distributions for a wider class of set systems. A particularly interesting algorithm is given in the work of \citet{innovations/PeresSV17}; they show that a fractional point in a matroid polytope can be rounded to an integral one such that the resulting distribution preserves marginals and satisfies negative association. However, a gap identified in their proof \cite{approx/QiuS22} complicates the application of their approach. The implications of this issue for the applicability of our results remain an area for further investigation.

As a concrete application, we consider the maximum coverage problem under group fairness constraints. Here, we have a universe of elements $\{1,\ldots,n\}$, a collection $S_1,\ldots,S_m$ of subsets of the universe, and a budget $k$. We are further given subsets $C_1,\ldots,C_\ell\subseteq[n]$ (which should be thought of as demographic groups) along with thresholds $w_1,\ldots,w_\ell$. Our goal is to choose $k$ sets from the collection to maximize the number of elements covered subject to the fairness constraint each demographic group is sufficiently covered (i.e., at least $w_j$ elements from $C_j$ are covered). Since this is a special case of multiobjective submodular maximization, there exists a $(1-1/e-\epsilon)$-approximation to the problem such that each fairness constraint is approximately satisfied \cite{focs/ChekuriVZ10,nips/Udwani18}.
Unfortunately, these results rely on the randomized swap-rounding algorithm due to its submodular concentration properties, which requires a super-linear time complexity. While swap rounding can be implemented with poly-logarithmic depth \cite{DBLP:conf/soda/ChekuriQ19}, a simpler dependent-rounding algorithm of~\citet{Srinivasan2001DistributionsOL} requires linear work and only $O(\log{n})$ depth, which improves the efficiency. Observe that the pre-processing step in \citet{nips/Udwani18} only requires $O(n\ell)$ time. Since we can solve the linear program for fair maximum coverage in near-linear time \cite{allen-zhuLP}, we obtain a near-linear time algorithm for the problem after using the efficient rounding algorithm of~\citet{Srinivasan2001DistributionsOL}. These same ideas can be used to improve the time complexity of the algorithm by \citet{ijcai/TsangWRTZ19} for influence maximization with group-fairness constraints. Since the proofs are similar to previous work, we defer the details to a future version of the paper.

More generally, negatively-associated random variables show up naturally in many settings (see e.g., the primer by \citet{wajc-primer}). \citet{rsa/DubhashiR98} studied the canonical example of \emph{balls and bins}, and showed that it satisfied both negative association and negative regression. Another example satisfying the negative-association conditions are any product measure over the set of bases of a balanced matroid, as shown by \citet{stoc/FederM92}. A final setting where such random variables occur are random spanning trees, which have been vital in the recent improvements to approximation algorithms for the traveling salesperson problem (see, e.g., \cite{stoc/KarlinKG21}). Random spanning trees are known to be strongly Rayleigh, which immediately implies that they are negatively associated. Our results may be interesting here as well. 

We also observe that the online rounding scheme of \cite{naor-srinivasan-wajc:online} has the strongly Rayleigh property: we immediately get strong concentration (on the lower-tail side) for monotone submodular functions, when the inputs for the function arrive online along with their (Bernoulli) distributions as in the setup of \cite{naor-srinivasan-wajc:online}. 

\paragraph{Related Work.} 
The concentration of negatively-dependent random variables was first formally studied by \citet{newmanCLT}, which showed a central limit theorem for a certain notion of negative dependence. Later on, \citet{siamcomp/PanconesiS97} showed that cylinder negatively dependent random variables yield the Chernoff-Hoeffding concentration inequalities, just like independent random variables. In the context of our paper, these results are somewhat specialized since they focus on linear combinations of random variables. 

For non-linear functions of the random variables, the majority of work has focused on the concentration of Lipschitz functions under various notions of negative dependence. \citet{cpc/PemantleP14} showed that for strong Rayleigh measures, one has Gaussian concentration for any Lipschitz function. Later on, \citet{garbe2018concentration} corrected an earlier proof of \citet{rsa/DubhashiR98}, showing that McDiarmid-like concentration results hold for Lipschitz functions of random variables satisfying negative regression. These results are complementary to ours since we are trying to give dimension-free concentration results. 

\section{Preliminaries}

\subsection{Notions of Negative Dependence}
\label{sec:neg-correl-notions}

We begin by defining the notion of negative dependence commonly found in the literature.

\paragraph{Negative Cylinder Dependence.} A collection of Boolean random variables $X_1,\ldots,X_n$ is said to be negative cylinder dependent if for every $S \subseteq[n]$, $$ \mathbb{E}\left[\textstyle\prod_{i \in S} X_i\right] \leq \textstyle\prod_{i \in S} \mathbb{E}\left[X_i\right] $$ and $$ \mathbb{E}\left[\textstyle\prod_{i \in S}\left(1-X_i\right)\right] \leq \textstyle\prod_{i \in S} \mathbb{E}\left[1-X_i\right].$$
Negative cylinder dependence is the weaker notion considered here. It is known to imply Chernoff bounds for linear combinations of $X_1,\ldots,X_n$ but it is insufficient to show our submodular concentration results.

\paragraph{Negative Association.} A collection of random variables $X_1, \ldots, X_n$ is said to be negatively associated if for any $I, J \subset[n], I \cap J=\emptyset$ and any pair of non-decreasing functions $f:\bb R^I \rightarrow \mathbb{R}, g:\bb R^J \rightarrow \mathbb{R}$, $$ \mathbb{E}\left[f\left(X_I\right) g\left(X_J\right)\right] \leq \mathbb{E}\left[f\left(X_I\right)\right] \mathbb{E}\left[g\left(X_J\right)\right] . $$ Here and in the following, $X_S$ refers to those random variables that are indexed by the elements in 
$S$, $X_S = \{X_i\;:\;i\in S\}$. Negative association is a significant strengthening of negative cylinder dependence, and has many additional useful closure properties. This will be one of the focuses of the paper.

\paragraph{Negative Regression.} A collection of random variables $X_1, \ldots, X_n$ is said to satisfy negative regression, if for any $I, J \subset[n], I \cap J=\emptyset$, any non-decreasing function $f:\bb R^I \rightarrow \mathbb{R}$ and $a \leq b \in\bb R^J$, $$ \mathbb{E}\left[f\left(X_I\right) \mid X_J=a\right] \geq \mathbb{E}\left[f\left(X_I\right) \mid X_J=b\right].$$
Negative regression is a strengthening of negative cylinder dependence, but its relationship with negative association is not yet well understood. It is known that negative association doesn't imply negative regression \cite{rsa/DubhashiR98}, but the opposite implication is not known. This will be the other focus of the paper.

\paragraph{Strong Rayleigh.} A collection of random variables $X_1,\ldots, X_n$ is said to satisfy the strong Rayleigh property if the generating function $$F(z_1,\ldots, z_n) = \bb E[\textstyle\prod_{j=1}^nz_j^{X_j}]$$ is a real stable polynomial (i.e., it has no root $(z_1,\ldots, z_n)\in \bb C^n$ with all positive imaginary components).
The strong Rayleigh property is the strongest notion of negative dependence, and has been shown to imply all other studied negative dependence definitions \cite{strongrayleigh}. As a result, all of our results apply here as well.

\subsection{Submodular Functions}

We also give a quick review of the basics of submodular functions.

\paragraph{Submodular Functions.} We say that a function $f:\{0,1\}^n\to\mathbb{R}$ is submodular if
$$f(X_1,\ldots,X_{i-1},1,X_{i+1},\ldots X_n)-f(X_1,\ldots,X_{i-1},0,X_{i+1},\ldots X_n)$$
is a non-increasing function of $X_1,\ldots,X_{i-1},X_{i+1},\ldots,X_n$ for each $i\in[n]$. When viewing the binary input of $f$ as the indicator vector for a set, this is equivalent to the more common definition that $f$ is submodular if for any $X,Y\subseteq[n]$ with $X\subseteq Y$ and any $x\not\in Y$, we have
$$f(X\cup\{x\})-f(X)\ge f(Y\cup\{x\})-f(Y).$$

\paragraph{Supermodular Functions.} We say that a function $g:\{0,1\}^n\to\mathbb{R}$ is supermodular if
$$g(X_1,\ldots,X_{i-1},1,X_{i+1},\ldots X_n)-g(X_1,\ldots,X_{i-1},0,X_{i+1},\ldots X_n)$$
is a non-decreasing function of $X_1,\ldots,X_{i-1},X_{i+1},\ldots,X_n$ for each $i\in[n]$. When viewing the binary input of $g$ as the indicator vector for a set, this is equivalent to the more common definition that $g$ is supermodular if for any $X,Y\subseteq[n]$ with $X\subseteq Y$ and any $x\not\in Y$, we have
$$g(X\cup\{x\})-g(X)\le g(Y\cup\{x\})-g(Y).$$

\paragraph{Mutlilinear Extension.} The multilinear extension of a function $f$ is
$$F(x)=\mathbb{E}[f(x)]=\textstyle\sum_{S \subseteq N} f(S) \textstyle\prod_{i \in S} x_i \textstyle\prod_{i \notin S} (1 - x_i),$$
for $x\in[0,1]^n$. If we view $x$ as a probability vector, the multilinear extension $F$ is simply the expected value of $f$ when each coordinate is rounded independently in $\{0,1\}$.

\section{Submodular Chernoff Bounds}

\subsection{1-Negative Association and Weak Negative Regression}\label{sec:1-na}

We first define the weaker notion of negative dependence which we work with, called 1-negative association, and prove some simple properties about it. We also define a related notion of weak negative regression, which is the analogue of 1-negative association for the notion of negative regression, and we show the equivalence between the two for binary random variables and show that weak negative regression is strictly stronger in general. After an initial draft, we discovered that \citet{approx/QiuS22} had already introduced the notion of weak negative regression for binary random variables in a context complementary to ours. Using their work, we can immediately show nice properties about 1-negative association.

\begin{definition}
    A collection of random variables $X_1,\ldots,X_n$ is said to satisfy 1-negative association if for any two monotone functions $f$ and $g$, where $g$ depends on a single random variable $X_i$ and $f$ depends on the remaining random variables $\{X_j\}_{j\in[n]\backslash\{i\}}$, we have $\mathbb{E}[fg]\le \mathbb{E}[f]\mathbb{E}[g]$.
\end{definition}

\begin{definition}
    A collection of random variables $X_1,\ldots,X_n$ is said to satisfy weak negative regression if for any index $i$ and any monotone function $f$ depending on the remaining random variables $\{X_j\}_{j\in[n]\backslash\{i\}}$, we have $\mathbb{E}[f|X_i=b]\le \mathbb{E}[f|X_i=a]$ for all $a\le b$.
\end{definition}

In the following lemmata, we show that weak negative regression implies 1-negative association in general. We then show that the reverse implication holds for binary random variables, but give an example showing that it does not hold in general. 

\begin{claim}
    If a collection of random variables $X_1,\ldots,X_n$ satisfies weak negative regression, then it satisfies 1-negative association.
\end{claim}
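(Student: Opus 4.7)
The plan is to reduce the claim to a one-variable covariance inequality by conditioning on $X_i$. Write
\[
\mathbb{E}[fg] \;=\; \mathbb{E}\bigl[\mathbb{E}[fg\mid X_i]\bigr] \;=\; \mathbb{E}\bigl[g(X_i)\cdot h(X_i)\bigr], \qquad\text{where }h(X_i):=\mathbb{E}[f\mid X_i],
\]
using that $g$ is a function of $X_i$ alone and can be pulled outside the inner expectation. The right-hand side is now an expectation of a product of two functions of the \emph{single} random variable $X_i$.

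Next, I would invoke weak negative regression to pin down the monotonicity of $h$. Assuming (without loss of generality, by negating both $f$ and $g$) that $f$ and $g$ are both non-decreasing, the hypothesis says $\mathbb{E}[f\mid X_i=a]\ge \mathbb{E}[f\mid X_i=b]$ whenever $a\le b$; that is, $h$ is a non-increasing function of $X_i$. So on the single variable $X_i$ we have two monotone functions in \emph{opposite} directions: $g$ non-decreasing and $h$ non-increasing.

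For such a pair, the classical Chebyshev sum inequality gives $\mathbb{E}[g(X_i) h(X_i)]\le \mathbb{E}[g(X_i)]\,\mathbb{E}[h(X_i)]$. The standard one-line justification is that for any independent copy $X_i'$ of $X_i$, $(g(X_i)-g(X_i'))(h(X_i)-h(X_i'))\le 0$ pointwise by the opposite monotonicity, so taking expectations yields $2\bigl(\mathbb{E}[g]\mathbb{E}[h]-\mathbb{E}[gh]\bigr)\ge 0$. Combining with the tower identity and $\mathbb{E}[h(X_i)]=\mathbb{E}[f]$ gives
\[
\mathbb{E}[fg]\;=\;\mathbb{E}[g(X_i) h(X_i)]\;\le\;\mathbb{E}[g(X_i)]\,\mathbb{E}[h(X_i)]\;=\;\mathbb{E}[g]\,\mathbb{E}[f],
\]
which is exactly 1-negative association. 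The case where both $f$ and $g$ are non-increasing follows by applying the above to $-f$ and $-g$ (which are non-decreasing, and weak negative regression applies to $-f$ in the correct direction).

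There is not really a substantial obstacle here; the only mildly delicate point is to confirm that the monotonicity produced by weak negative regression is opposite to that of $g$, so that the Chebyshev-type step goes in the right direction. Everything else is routine conditioning plus an elementary one-variable covariance inequality.
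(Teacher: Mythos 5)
Your proof is correct and takes essentially the same route as the paper's: condition on $X_i$ via the tower property, observe that weak negative regression makes $h(X_i)=\mathbb{E}[f\mid X_i]$ monotone in the direction opposite to $g$, and then apply the one-variable Chebyshev/covariance inequality. The only cosmetic difference is that you spell out the independent-copy justification of the Chebyshev step and handle the non-increasing case explicitly, whereas the paper simply asserts the negative correlation and invokes a symmetry ``without loss of generality.''
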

\begin{proof}
Assume $X_1,\ldots,X_n$ satisfy weak negative regression; we will prove that it also satisfies 1-negative association. Let $f$ and $g$ be monotone functions such that $f$ depends on $X_I$ for some subset $I\subseteq[n]$ and $g$ depends on $X_i$ for $i\not\in I$. Without loss of generality, let us assume that $f$ and $g$ are non-decreasing. 


First, define the function $F(a) = \mathbb{E}[f(X) | X_i = a].$ Evaluating the expectation $\mathbb{E}[f(X_I)g(X_i)]$, we can express it via the law of total expectation as 
$$ \mathbb{E}[f(X_I)g(X_i)] = \mathbb{E}[\mathbb{E}[f(X_I)|X_i]g(X_i)] = \mathbb{E}[F(X_i)g(X_i)]. $$ 
Next, we observe that by the definition of weak negative regression, $F(a)$ is non-increasing. Therefore, random variables \(F(X_i)\) and \(g(X_i)\) are negatively correlated, yielding
$$ \mathbb{E}[F(X_i)g(X_i)] \leq \mathbb{E}[F(X_i)] \mathbb{E}[g(X_i)]. $$
Converting \(\mathbb{E}[F(X_i)]\) back into the terms of \(f\) and \(g\), we find 
$$ \mathbb{E}[F(X_i)] \mathbb{E}[g(X_i)] = \mathbb{E}[f(X)] \mathbb{E}[g(X_i)]. $$ 
The overall inequality 
 $\mathbb{E}[f(X_I)g(X_i)] \leq \mathbb{E}[f(X_I)] \mathbb{E}[g(X_i)] $ 
thus holds, which establishes that \(X_1, \ldots, X_n\) are 1-negatively associated, completing the proof.

\end{proof}

\begin{claim}
    If a collection of binary random variables $X_1,\ldots,X_n$ satisfies 1-negative association, then it satisfies weak negative regression.
\end{claim}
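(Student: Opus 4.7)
The plan is to exploit the fact that, for binary random variables, weak negative regression reduces to the two-point comparison $\mathbb{E}[f \mid X_i = 1] \le \mathbb{E}[f \mid X_i = 0]$. Since each $X_i \in \{0,1\}$, the only pair $a \le b$ that carries any content is $a=0, b=1$ (the constant cases collapse to trivial equalities, and cases where $\Pr[X_i=0]$ or $\Pr[X_i=1]$ equals $0$ make the corresponding conditional expectation vacuous). So it suffices to fix $i$ and a monotone $f$ depending on $X_{[n]\setminus\{i\}}$, and to prove this single inequality.

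The key step is to plug the non-decreasing function $g(X_i) := X_i$ into the 1-negative association hypothesis. This is a monotone function of the single variable $X_i$, and $f$ is a monotone function of the other variables, so 1-NA directly gives
\[
\mathbb{E}[f \cdot X_i] \;\le\; \mathbb{E}[f]\,\mathbb{E}[X_i].
\]
Now I would expand both sides by conditioning on $X_i$. Writing $p := \Pr[X_i=1]$, $E_1 := \mathbb{E}[f \mid X_i=1]$, and $E_0 := \mathbb{E}[f \mid X_i=0]$, the left-hand side becomes $E_1 \cdot p$ and the right-hand side becomes $\bigl(E_1\,p + E_0(1-p)\bigr)\,p$. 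Rearranging yields $E_1\,p(1-p) \le E_0\,p(1-p)$.

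Assuming $0 < p < 1$, dividing by $p(1-p) > 0$ gives $E_1 \le E_0$, which is precisely the weak-negative-regression inequality. The boundary cases $p \in \{0,1\}$ do not need any work: one of the two conditional expectations is not defined and the other condition becomes vacuous. The proof works identically when $f$ is non-increasing (one can apply the argument to $-f$, or equivalently use $g(X_i) = -X_i$), so the monotonicity direction is immaterial.

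The main obstacle I anticipate is essentially bookkeeping rather than a conceptual hurdle: ensuring that the function $g(X_i)=X_i$ is a legitimate choice in the 1-NA definition (it is, being trivially monotone and depending only on $X_i$), and being careful about the degenerate $p \in \{0,1\}$ cases. No further machinery, such as coupling or induction on $n$, is required; the whole argument is one application of the 1-NA inequality followed by elementary algebra. This should make the proof noticeably shorter than its companion implication (weak negative regression $\Rightarrow$ 1-NA), which required the law of total expectation applied to a more elaborate pairing.
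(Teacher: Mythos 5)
Your proof is correct and is essentially the same as the paper's: both take $g(X_i)=X_i$, apply the 1-negative-association inequality, expand both sides by conditioning on $X_i$, and conclude by dividing out the positive factor. The only cosmetic difference is that the paper first normalizes so that $\mathbb{E}[f\mid X_i=1]=0$, whereas you carry the general expressions through the algebra; both handle the degenerate $p\in\{0,1\}$ cases in the same way.
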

\begin{proof}
    Recall that we wish to prove that for any non-decreasing function $f$ depending on some subset $I\subseteq[n]$ and any $i\not\in I$, we have that
    $$\mathbb{E}[f(X_I)|X_i = 0]\geq \bb E[f(X_I)|X_i = 1].$$
    By the definition of 1-negative association, we obtain that for any monotone functions $g$ which depends on $X_i$, the following inequality holds:
    \begin{align}
        \bb E[f(X_I) g(X_i)]\leq\bb E[f(X_I)]\cdot\bb E[g(X_i)].\label{eq:1-na-condition}
    \end{align}
    Without loss of generality, we may assume that $\mathbb{E}[f(X_I)|X_i=1]=0$ by shifting $f$ by a constant. By choosing $g$ to be the identity function, we can apply the law of total probability to obtain
    \begin{align*}
            \bb E[f(X_I)\cdot g(X_i)] = \Pr [X_i = 0]\cdot \bb{E}[f(X_I)|X_i = 0]\cdot g(0)+\Pr [X_i = 1]\cdot \bb{E}[f(X_I)|X_i = 1]\cdot g(1) = 0.
    \end{align*}
    Plugging this into Equation \ref{eq:1-na-condition}, we obtain
   \begin{align*}
    0&\leq \bb E[f(X_I)]\cdot \bb E[g(X_i)]= \bb E[f(X_I)|X_i=0]\Pr[X_i=0]\cdot \mathbb{E}[X_i],
   \end{align*}
   again by the law of total probability. Since $\mathbb{E}[X_i]>0$ and $\Pr[X_i=0]>0$, this implies that $$ \bb  E[f(X_I)|X_i=0]\geq0,
    $$
    which concludes the proof since $\mathbb{E}[f(X_I)|X_i=1]=0$.
\end{proof}

\begin{claim}
There exists (non-binary) distributions over 2 random variables which satisfy 1-negative association but not weak negative regression. In other words, 1-negative association is strictly more general than weak negative regression for non-binary random variables.
\end{claim}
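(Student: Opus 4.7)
The plan is to construct an explicit joint distribution on two random variables $X_1$ and $X_2$, where one of them takes at least three values, so that 1-negative association holds but weak negative regression fails. The guiding intuition is that 1-NA is equivalent to a condition on \emph{threshold} events $\{X_i \ge t\}$: since any monotone function on a totally ordered finite support decomposes as a non-negative combination of threshold indicators plus a constant, and covariance is bilinear, 1-NA reduces to $\mathrm{Cov}(\mathbf{1}[X_1 \ge t], \mathbf{1}[X_2 \ge s]) \le 0$ for every pair $(t,s)$. Weak negative regression is strictly stronger, since it demands that $\mathbb{E}[f(X_{-i}) \mid X_i = a]$ be non-increasing pointwise in $a$; this can fail while the ``averaged'' version over upward-closed sets $\{X_i \ge t\}$ remains well behaved.

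Concretely, I would let $X_1$ be uniform on $\{0,1,2\}$ and specify the conditional law of $X_2 \in \{0,1\}$ by $\Pr[X_2 = 1 \mid X_1 = 0] = 1/2$, $\Pr[X_2 = 1 \mid X_1 = 1] = 1$, and $\Pr[X_2 = 1 \mid X_1 = 2] = 0$. Weak negative regression fails in a single line: taking the monotone function $f(X_2) = X_2$ and passing from $X_1 = 0$ to $X_1 = 1$ increases the conditional expectation from $1/2$ to $1$, violating the required monotonicity.

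To verify 1-NA, I would invoke the threshold reduction and then check the finitely many nontrivial cases. With $X_2$ binary and $X_1$ three-valued, only $(t,s) \in \{(1,1),(2,1)\}$ matter. A direct calculation from the joint distribution yields $\mathrm{Cov}(\mathbf{1}[X_1 \ge 1], X_2) = 1/3 - (2/3)(1/2) = 0$ and $\mathrm{Cov}(\mathbf{1}[X_1 \ge 2], X_2) = 0 - (1/3)(1/2) = -1/6$, both non-positive, so 1-NA holds. By the symmetry of covariance, no separate check is needed when the roles of $f$ and $g$ are swapped.

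The main obstacle is that by the previous claim, 1-NA and weak negative regression are equivalent in the binary case, so any counterexample must involve $|\mathrm{supp}(X_i)| \ge 3$ at the index $i$ witnessing the failure; this is why the construction uses $X_1 \in \{0,1,2\}$. A further subtlety is that naive perturbations of the conditional means tend to push one of the threshold covariances into strictly positive territory, breaking 1-NA. The construction above is tight in exactly this sense: $\mathrm{Cov}(\mathbf{1}[X_1 \ge 1], X_2) = 0$, and the choice $\mathbb{E}[X_2 \mid X_1 = 0] = 1/2$ sits at the minimum value permitted by 1-NA given the other two conditional means, with any smaller value violating the threshold inequality.
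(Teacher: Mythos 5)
Your proof is correct, and it takes a genuinely different route from the paper. The paper exhibits $(X_1,X_2)$ uniform on $\{(0,3),(1,1),(2,2),(3,0)\}$ and verifies 1-negative association by a direct algebraic computation: after normalizing $f(0)=g(0)=0$, they show
$\mathbb{E}[f]\mathbb{E}[g]-\mathbb{E}[fg]\ge 0$
by splitting the product into pieces dominated by $f(1)g(1)$ and $f(2)g(2)$ and comparing term by term. Your construction is smaller (one variable binary, the other ternary) and your verification is more structural: you observe that for two variables on finite ordered supports, 1-NA reduces via the threshold decomposition $f(x)=f(0)+\sum_t (f(t)-f(t-1))\mathbf{1}[x\ge t]$ and bilinearity of covariance to checking $\mathrm{Cov}(\mathbf{1}[X_1\ge t],\mathbf{1}[X_2\ge s])\le 0$ for all $t,s$, which leaves only two covariances to compute. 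Your numbers check out: $\mathrm{Cov}(\mathbf{1}[X_1\ge 1],X_2)=0$ and $\mathrm{Cov}(\mathbf{1}[X_1\ge 2],X_2)=-1/6$, both nonpositive, while $\mathbb{E}[X_2\mid X_1=0]=1/2<1=\mathbb{E}[X_2\mid X_1=1]$ violates weak negative regression. The threshold reduction buys a cleaner finite check and makes the ``tightness'' of the example transparent ($p=1/2$ is the boundary of the 1-NA region given the other two conditional means); the paper's example has a pleasant $f\leftrightarrow g$ symmetry but the verification requires more ad hoc inequality juggling. One caveat worth stating explicitly if you were to write this up: the threshold reduction as you use it is specific to $n=2$, where the ``remaining variables'' form a single coordinate; for $n>2$ the function $f$ lives on a product of ordered sets and the reduction to one-dimensional thresholds does not apply directly.
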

\begin{proof}
Let's first discuss the intuition for the construction of the counterexample. One can show via algebra that $\mathbb{E}[f(X_I)g(X_i)]-\mathbb{E}[f(X_I)]\mathbb{E}[g(X_i)]$ can be expanded as the following expression:
$$\textstyle\sum_{x<y}
\Pr[X_i=x]\Pr[X_i = y]\cdot \big(\bb E[f(X_I)|X_i = x] - \bb E[f(X_I)|X_i = y]\big)\big(g(x)-g(y)\big),
$$
where the summation is over the values that $X_i$ takes with non-zero probability.

Now, suppose that for some values $x<y$ we have that $$\bb E[f(X_1,\ldots,X_n)|X_i = x] - \bb E[f(X_1,\ldots,X_n)|X_i = y]<0,$$ violating the weak negative regression property. This would imply that some of the summands are positive (since $g(x)<g(y)$ by monotonicity). In the case of binary random variables, the summation would only consist of a single summand so 1-negative association would be violated. For general random variables, the summation consists of multiple terms so the summation may still be negative even when a single summand is positive. Consequently, the random variables may still satisfy 1-negative association. 

We now give the example. Consider the random variables $(X_1, X_2)$ which are uniformly distributed on their support set $\{(0,3),(1,1),(2,2),(3,0)\}$. By considering an identity function $\mathds 1_x:\{0,1,2,3\}\to \{0,1,2,3\}$, we can show that that the distribution of $(X_1,X_2)$ does not satisfy weak negative regression:
$$\bb E[\mathds 1_x(X_2)|X_1 = 1] = \mathds 1_x(1) = 1<2 = \mathds 1_x(2) = \bb E[\mathds 1_x(X_2)|X_1 = 2].$$
However, for any pair of non-decreasing functions $f,g:\{0,1,2,3\}\to \bb R$, we have
$$
\bb E[f(X_1)]\bb E[g(X_2)] - \bb E[f(X_1)g(X_2)] = \frac{f(1)+f(2)+f(3)}{4}\cdot\frac{g(1)+g(2)+g(3)}{4} - \frac{f(1)g(1)+f(2)g(2)}{4},
$$
where we have again assumed without loss of generality that $f(0)=g(0)=0$. 

We claim that the quantity on the right hand side is always non-negative. In order to see this, observe that $f(2)g(2)\le f(i)g(j)$ for any $i,j\ge 2$ by monotonicity. As a result, we have
$$\frac{f(2)g(2)}{4}\leq \frac{(f(2)+f(3))(g(2)+g(3))}{16}.$$
Further, we observe that $f(1)g(1)\le f(i)g(j)$ for any $i,j\ge 1$ by monotonicity. As a result, we have
$$\frac{f(1)g(1)}{4}\leq \frac{f(1)(g(2)+g(3)) + g(1)(f(2)+f(3))}{16}.$$
Combining these two inequalities immediately and observing that $f(1)g(1)\ge 0$ by monotonicity implies our desired result. Hence, the distribution is 1-negatively associated.
\end{proof}

Since 1-negative association and weak negative regression are equivalent for binary random variables and weak negative regression has been shown to be strictly stronger than cylinder negative dependence \cite[Proposition 2.4]{approx/QiuS22}, we also have that $1$-negative association is strictly stronger than cylinder negative dependence. Additionally, since weak negative regression is strictly stronger than $1$-negative association for general random variables and weak negative regression has been shown to be strictly weaker than negative association and negative regression \cite[Proposition 2.4]{approx/QiuS22}, we have that 1-negative association is strictly weaker than negative association and negative regression. We summarize these in the following corollaries.

\begin{corollary}
    1-negative association is a strictly weaker condition than negative association.\label{lem:1-na}
\end{corollary}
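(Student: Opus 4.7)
The plan is to establish the corollary by combining a trivial implication with an already-known separation result. Concretely, I would split the proof into two parts: (i) every negatively-associated distribution is 1-negatively associated, and (ii) there exists a distribution that is 1-negatively associated but not negatively associated.

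For part (i), I would simply unwind the definitions. The condition of 1-negative association is exactly the specialization of negative association to the case $I \subseteq [n]\setminus\{i\}$ and $J = \{i\}$, with the function $g$ depending only on the single coordinate $X_i$. Since the definition of negative association quantifies over \emph{all} disjoint pairs $(I,J)$ and all monotone $f,g$, this case is covered, so negative association implies 1-negative association without any additional work.

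For part (ii), I would invoke the machinery already assembled in this section rather than construct a new example from scratch. By the first claim in Section~\ref{sec:1-na}, weak negative regression implies 1-negative association. On the other hand, \citet[Proposition 2.4]{approx/QiuS22} exhibits a distribution which satisfies weak negative regression but fails negative association. Putting these two facts together yields a distribution that is 1-negatively associated but not negatively associated, which is exactly the strict separation claimed in the corollary.

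The argument has essentially no hard steps; the one thing to be slightly careful about is making sure the Qiu--Singla separation example is a distribution over \emph{binary} (or at least valid in our setting) random variables so that the implication ``weak negative regression $\Rightarrow$ 1-negative association'' applies cleanly. Since the first claim of this section is proved for general random variables, no such restriction is actually needed, and the corollary follows directly. The main conceptual point worth emphasizing in the write-up is that the two directions use qualitatively different inputs: the easy direction is definitional, while the strictness direction is imported from prior work via our weak-negative-regression bridge.
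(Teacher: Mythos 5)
Your proof is correct and follows essentially the same route as the paper: both obtain the strict separation by combining the earlier claim that weak negative regression implies 1-negative association with the Qiu--Singla result that some distribution satisfies weak negative regression but not negative association. Your direct definitional argument for the non-strict direction (negative association implies 1-negative association by specializing to $J=\{i\}$) is a slight simplification over the paper's route through weak negative regression, but the substance is identical.
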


\begin{corollary}
    1-negative association is a strictly weaker condition than negative regression.\label{lem:1-nr}
\end{corollary}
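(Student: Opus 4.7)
The plan is to establish the corollary by chaining together the two preceding claims with the strict separation between weak negative regression and negative regression that is already in the literature. First I would show that negative regression implies $1$-negative association. Fix a collection $X_1,\ldots,X_n$ satisfying negative regression. For any index $i$, any non-decreasing $f$ depending on $\{X_j\}_{j \in [n]\setminus\{i\}}$, and any scalars $a\le b$, specializing the negative regression definition to the choice $J=\{i\}$ and $I=[n]\setminus\{i\}$ gives $\mathbb{E}[f(X_I)\mid X_i=a]\ge \mathbb{E}[f(X_I)\mid X_i=b]$, which is exactly the statement of weak negative regression. Hence negative regression implies weak negative regression, and by the first claim in Section~\ref{sec:1-na} (weak negative regression implies $1$-negative association), the implication from negative regression to $1$-negative association follows.

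For the strictness, I would invoke Proposition~2.4 of \citet{approx/QiuS22}, which exhibits a distribution of binary random variables satisfying weak negative regression but not negative regression. Since the second claim of Section~\ref{sec:1-na} shows that weak negative regression and $1$-negative association coincide on binary variables, the same distribution satisfies $1$-negative association while failing negative regression. Combining the forward implication with this counterexample yields the strict separation asserted in the corollary.

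I do not expect a genuine obstacle: both pieces are essentially bookkeeping on top of results already assembled in Section~\ref{sec:1-na} and the cited Proposition~2.4. The one point to check with care is that the Qiu--Singla counterexample is on binary random variables, so that the binary-case equivalence from the second claim applies; were the counterexample only stated for general-valued variables, one would instead have to combine the one-directional implication (weak negative regression implies $1$-negative association) with a separately verified binary strictness witness, but this is not needed here.
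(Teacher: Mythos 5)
Your proposal is correct and essentially reproduces the paper's own (brief, one-paragraph) argument: negative regression implies weak negative regression by specialization, weak negative regression implies 1-negative association by the paper's first claim, and the strictness witness is supplied by Proposition~2.4 of \citet{approx/QiuS22}. One small remark: your final hedge about needing the binary-case equivalence is unnecessary, since the general implication from weak negative regression to 1-negative association already lets you conclude that the Qiu--Singla witness (which satisfies weak negative regression but not negative regression) is 1-negatively associated, regardless of whether it is binary.
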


\begin{corollary}
    1-negative association is a strictly stronger condition than negative cylinder dependence.
\end{corollary}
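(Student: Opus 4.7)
The plan is to establish the two directions separately: a containment (1-negative association implies negative cylinder dependence) and a strict separation witness. The containment is the routine direction, and I would handle it by induction on $|S|$. For the base case $|S|=1$, both inequalities in the definition of negative cylinder dependence become equalities. For the inductive step with $|S|\ge 2$, I would fix any index $i\in S$ and factor $\prod_{j\in S}X_j = \bigl(\prod_{j\in S\setminus\{i\}}X_j\bigr)\cdot X_i$, treating $f(X_{S\setminus\{i\}}) = \prod_{j\in S\setminus\{i\}}X_j$ as a non-decreasing function of the variables indexed by $S\setminus\{i\}$ (using that the $X_j$ are Boolean and hence non-negative) and $g(X_i)=X_i$ as a non-decreasing function of $X_i$. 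The 1-negative association hypothesis then yields $\mathbb{E}[\prod_{j\in S}X_j]\le \mathbb{E}[\prod_{j\in S\setminus\{i\}}X_j]\cdot\mathbb{E}[X_i]$, and the inductive hypothesis applied to $S\setminus\{i\}$ closes the chain to $\prod_{j\in S}\mathbb{E}[X_j]$. The analogous bound $\mathbb{E}[\prod_{j\in S}(1-X_j)]\le\prod_{j\in S}\mathbb{E}[1-X_j]$ follows by an identical induction after swapping to the non-increasing pair $f=\prod_{j\in S\setminus\{i\}}(1-X_j)$ and $g=1-X_i$; a pair of non-increasing monotone functions satisfies 1-negative association as well, since the defining inequality $\mathbb{E}[fg]\le\mathbb{E}[f]\mathbb{E}[g]$ is invariant under the substitution $(f,g)\mapsto(-f,-g)$.

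For strictness, I would appeal directly to the work already assembled in this section. The two claims proved just above establish that for binary random variables 1-negative association and weak negative regression coincide. Proposition 2.4 of \cite{approx/QiuS22} exhibits a binary distribution that is negatively cylinder dependent but fails weak negative regression; by the equivalence, this distribution also fails 1-negative association, which provides the desired separating example.

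The main obstacle, such as there is one, is bookkeeping in the inductive step: one has to verify that the functions $f$ and $g$ used in the induction genuinely depend on disjoint sets of variables and are monotone in the sense required by the 1-negative association definition, and one has to be careful that the inductive hypothesis is applied to the strictly smaller set $S\setminus\{i\}$ rather than $S$ itself. Beyond this, the argument is entirely standard and the substantive mathematical content for strictness is outsourced to \cite{approx/QiuS22}.
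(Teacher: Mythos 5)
Your proposal is correct, and it differs from the paper's route in a worthwhile way. The paper dispatches the entire corollary in one line: it invokes the equivalence of 1-negative association and weak negative regression for binary variables (proved in the two claims preceding the corollary) and then cites Proposition~2.4 of \cite{approx/QiuS22}, which establishes that weak negative regression is \emph{strictly} stronger than negative cylinder dependence; the containment and the separating example are both outsourced to that proposition. You instead give a self-contained induction for the containment, factoring $\prod_{j\in S}X_j$ as a product of a monotone function of $X_{S\setminus\{i\}}$ and a monotone function of $X_i$, applying the 1-negative association inequality once, and recursing on the smaller set; the complementary inequality for $\prod_{j\in S}(1-X_j)$ follows by the $(f,g)\mapsto(-f,-g)$ symmetry you correctly observe. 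Only the strictness is outsourced to \cite{approx/QiuS22}, via the same binary equivalence the paper uses. The tradeoff: the paper's argument is a single sentence leveraging already-stated results, while yours makes the implication transparent without passing through weak negative regression, which is arguably preferable if a reader wants to see the implication chain $1\text{-NA}\Rightarrow\text{NCD}$ directly rather than as a corollary of a cited result. Both are valid; neither has a gap.

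One small note: the induction works cleanly because the 1-negative association hypothesis is stated for the full collection $X_1,\dots,X_n$, and you only ever apply it with $f$ a function of a subset of $[n]\setminus\{i\}$ (monotone functions ignoring some coordinates are still monotone), while the inductive hypothesis is applied to the strictly smaller index set $S\setminus\{i\}$ within the same collection — exactly the bookkeeping you flagged, and it checks out.
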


\subsection{Proof of Submodular Concentration}\label{sec:sub}

We will now prove our main result. As mentioned in the introduction, our proof is based on the standard technique of bounding the exponential moments. The following lemma contains our main technical contribution, stating that the exponential moments of $f(X_1,\ldots,X_n)$ under 1-negative association is dominated by that under independence. Our results will follow easily afterwards.

\begin{lemma}\label{lem:exp-mom}
    Let $X_1,\ldots,X_n$ be 1-negatively associated random variables and let $X_1^*,\ldots,X_n^*$ be independent random variables with the same marginal distributions. Also let $f$ be a non-negative monotone function.
    \begin{itemize}
        \item If $f$ is a submodular function and $\lambda<0$, we have $\mathbb{E}[\exp(\lambda f(X_1,\ldots,X_n))]\le \mathbb{E}[\exp(\lambda f(X_1^*,\ldots,X_n^*))]$.
        \item If $f$ is a supermodular function and $\lambda>0$, we have $\mathbb{E}[\exp(\lambda f(X_1,\ldots,X_n))]\le \mathbb{E}[\exp(\lambda f(X_1^*,\ldots,X_n^*))]$.
    \end{itemize}
\end{lemma}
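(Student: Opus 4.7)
My plan is to prove the lemma by induction on $n$, the number of random variables. The base case $n = 1$ is immediate since $X_1$ and $X_1^*$ have the same marginal and the two exponential moments are equal. For the inductive step I split the full replacement of $(X_1, \ldots, X_n)$ by $(X_1^*, \ldots, X_n^*)$ into two stages: first I replace $X_1$ by an independent copy $X_1^*$ while keeping the original joint law of $(X_2, \ldots, X_n)$, and then, for each realized value of $X_1^*$, I apply the inductive hypothesis to replace $(X_2, \ldots, X_n)$ by $(X_2^*, \ldots, X_n^*)$. The second stage is valid because (i) $x_{-1} \mapsto f(X_1^*, x_{-1})$ inherits non-negativity, monotonicity and (sub/super)modularity, and (ii) the marginal of $(X_2, \ldots, X_n)$ is itself 1-negatively associated—any monotone function on a strict subset of the variables is still a monotone function on the full vector, so the 1-NA condition restricts automatically. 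The heart of the argument is therefore the single-variable replacement.

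\textbf{The single-variable replacement.} Write $X_{-1} = (X_2, \ldots, X_n)$ and $h(x_1, x_{-1}) = \exp(\lambda f(x_1, x_{-1}))$. Exploiting that $x_1 \in \{0, 1\}$, I decompose
$$h(x_1, x_{-1}) = h(0, x_{-1}) + x_1 \cdot h(0, x_{-1}) \cdot g(x_{-1}), \qquad g(x_{-1}) := \exp(\lambda d(x_{-1})) - 1,$$
where $d(x_{-1}) = f(1, x_{-1}) - f(0, x_{-1})$ is the discrete first derivative of $f$ in coordinate $1$. Taking expectations of this identity under the original 1-NA joint law of $(X_1, X_{-1})$ and under the "replaced" law with $X_1^*$ independent of $X_{-1}$, the $h(0, X_{-1})$ terms cancel, and using $X_1^* \sim X_1$ independent of $X_{-1}$ to split the cross-term on the replaced side, the inequality $\mathbb{E}[h(X_1, X_{-1})] \le \mathbb{E}[h(X_1^*, X_{-1})]$ reduces to
$$\mathbb{E}\bigl[X_1 \cdot h(0, X_{-1}) g(X_{-1})\bigr] \le \mathbb{E}[X_1] \cdot \mathbb{E}\bigl[h(0, X_{-1}) g(X_{-1})\bigr].$$
This is precisely the 1-negative association inequality with $X_1$ in the role of the singled-out variable, provided that $x_{-1} \mapsto h(0, x_{-1}) g(x_{-1})$ is monotone in the same direction as $x_1 \mapsto x_1$, i.e., non-decreasing.

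\textbf{Monotonicity check and main obstacle.} For the submodular case with $\lambda < 0$: monotonicity of $f$ gives $d \ge 0$ and makes $h(0, \cdot) \ge 0$ non-increasing in $x_{-1}$ (since $\lambda < 0$); submodularity makes $d$ non-increasing, so $\lambda d$ is non-decreasing and $g = e^{\lambda d} - 1$ is non-positive and non-decreasing. For $x' \ge x$,
$$h(0, x') g(x') - h(0, x) g(x) = h(0, x')\bigl(g(x') - g(x)\bigr) + \bigl(h(0, x') - h(0, x)\bigr) g(x),$$
and both summands are non-negative (the second because it is a product of two non-positive quantities), so $h(0, \cdot) g(\cdot)$ is non-decreasing as needed. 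For the supermodular case with $\lambda > 0$: $h(0, \cdot)$ and $g$ are both non-negative and non-decreasing, so the product is trivially non-decreasing. The hard part will be the submodular case, where the two factors of $h(0, \cdot) g(\cdot)$ pull in opposite directions and it is only their sign pattern—forced together by submodularity and the sign of $\lambda$—that makes the product monotone in the correct direction so that 1-NA can be invoked.
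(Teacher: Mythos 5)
Your proof is correct and follows essentially the same route as the paper: the same pointwise decomposition $\exp(\lambda f(X_1,x_{-1})) = \exp(\lambda f(0,x_{-1})) + X_1\cdot \exp(\lambda f(0,x_{-1}))\bigl(\exp(\lambda d(x_{-1}))-1\bigr)$, the same sign-and-monotonicity analysis of the two factors multiplying $X_1$ (driven by monotonicity of $f$, (sub/super)modularity, and the sign of $\lambda$), and the same invocation of the 1-NA inequality followed by reassembly using that $X_1^*$ has the same marginal and is independent of the rest. The paper phrases the iteration as replacing one coordinate at a time (noting the intermediate vectors remain 1-NA) rather than as a formal induction conditioning on $X_1^*$, but that is a purely cosmetic difference.
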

\begin{proof}
Fix $\lambda<0$ if $f$ is submodular and $\lambda>0$ if $f$ is supermodular. Observe that in order to prove the lemma, it suffices to prove
\begin{align}
    \mathbb{E}[\exp(\lambda\cdot f(X_1,\ldots,X_i,\ldots,X_m))]\le \mathbb{E}[\exp(\lambda\cdot f(X_1,\ldots,X_i^*,\ldots,X_m))],\label{eq:one-step}
\end{align}
since we can iteratively apply the above inequality to each $X_i$ (note that we can do this because independent variables are also negatively associated). For simplicity of notation and without loss of generality, we will prove the inequality for $i=1$.

By considering the cases of $X_1=0$ and $X_1=1$ separately, we have 
\begin{align*}
    \exp(\lambda\cdot f(X_1,\ldots,X_n))&=X_1\cdot\exp(\lambda\cdot f(1,X_2,\ldots,X_n)))+(1-X_1)\cdot \exp(\lambda\cdot f(0,X_2,\ldots,X_n))),
\end{align*}
where the equality holds pointwise on the underlying probability space. Via simple algebraic manipulations, we can rewrite the right hand side as
\begin{align*}
    &X_1\cdot\left[\exp(\lambda\cdot f(1,X_2,\ldots,X_n)))-\exp(\lambda\cdot f(0,X_2,\ldots,X_n))\right]+\exp(\lambda\cdot f(0,X_2,\ldots,X_n))
\end{align*}
Taking expectations, we now have that $\mathbb{E}[\exp(\lambda f(X_1,\ldots,X_n))]$ can be written as
\begin{align}
    \mathbb{E}\big[X_1\cdot\left[\exp(\lambda\cdot f(1,X_2,\ldots,X_n)))-\exp(\lambda\cdot f(0,X_2,\ldots,X_n))\right]\big]+\mathbb{E}\big[\exp(\lambda\cdot f(0,X_2,\ldots,X_n))\big]\label{eq:sum}
\end{align}
Observe that $X_1$ is clearly an increasing function of $X_1$. We claim that if either ($i$) $f$ is submodular and $\lambda<0$ or ($ii$) $f$ is supermodular and $\lambda>0$, we have that $\exp(\lambda\cdot f(1,X_2,\ldots,X_n)))-\exp(\lambda\cdot f(0,X_2,\ldots,X_n))$ is an increasing function in $X_2,\ldots, X_n$. Indeed, we first rewrite the function as 
\begin{align*}
    \exp(\lambda\cdot f(0,X_2,\ldots,X_n))\cdot [\exp(\lambda\cdot(f(1,X_2,\ldots,X_n)-f(0,X_2,\ldots,X_n)))-1]\eqqcolon A_1\cdot A_2
\end{align*}
for simplicity of notation.

Let us first consider the case when $\lambda<0$ and $f$ is submodular. We have that $A_1$ is ($i$) positive because the exponential function is always positive and ($ii$) non-increasing in $X_2,\ldots, X_n$ because $f$ is non-decreasing and $\lambda<0$. We also have that $A_2$ is ($i$) negative because the argument in $\exp(\cdot)$ is negative, so the exponential is in $(0,1)$ ($ii$) non-decreasing since $\lambda<0$ and the difference of $f$ evaluated at $X_1=1$ and $X_1=0$ is non-increasing by definition of submodularity.
Hence, our expression of interest is the product of a function $A_1$ which decreases towards 0 and a function $A_2$ which increases towards $0$. The product will be negative and monotonically increasing towards $0$.

Now, let us consider the case when $\lambda>0$ and $f$ is supermodular. We have that $A_1$ is ($i$) positive because the exponential function is always positive and ($ii$) non-decreasing since $\lambda>0$, $f$ is monotone, and $\exp(\cdot)$ is also monotone. We also have that $A_2$ is ($i$) positive because the argument of $\exp(\cdot)$ is positive since $f$ is monotone so the exponential is greater than $1$ and ($ii$) non-decreasing since $\lambda>0$ and the difference of $f$ evaluated at $X_1=1$ and $X_1=0$ is non-decreasing by definition of supermodularity. As a result, the product will be positive and non-decreasing, as desired.

Since we have shown that the $A_1A_2$ is also monotone, we now have that the first term in Equation \ref{eq:sum} can be written as the product of monotone functions of disjoint subsets, one of which is the singleton set. By 1-negative association, we have that the first term is upper bounded by
$$\mathbb{E}[X_1]\cdot\mathbb{E}[\exp(\lambda\cdot f(1,X_2,\ldots,X_n))-\exp(\lambda\cdot f(0,X_2,\ldots,X_n))].$$
Consequently, the entire expression in (\ref{eq:sum}) is upper bounded by
$$\mathbb{E}[X_1]\cdot\mathbb{E}[\exp(\lambda\cdot f(1,X_2,\ldots,X_n))-\exp(\lambda\cdot f(0,X_2,\ldots,X_n))]+\mathbb{E}[\exp(\lambda\cdot f(0,X_2,\ldots,X_n))].$$
Since $X_1$ and $X_1^*$ have the same marginal distributions, the above is exactly equal to
$$\mathbb{E}[X_1^*]\cdot\mathbb{E}[\exp(\lambda\cdot f(1,X_2,\ldots,X_n))-\exp(\lambda\cdot f(0,X_2,\ldots,X_n))]+\mathbb{E}[\exp(\lambda\cdot f(0,X_2,\ldots,X_n))].$$
And since $X_1^*$ is independent with $X_2,\ldots,X_m$ by assumption, the above is equal to
$$\mathbb{E}[X_1^*\cdot\exp(\lambda\cdot f(1,X_2,\ldots,X_n))-\exp(\lambda\cdot f(0,X_2,\ldots,X_n))]+\mathbb{E}[\exp(\lambda\cdot f(0,X_2,\ldots,X_n))].$$
In particular, observe that this is in the exact same form as Equation \ref{eq:sum}, except with $X_1$ replaced with $X_1^*$. Note that when we transformed the left-hand side of Equation \ref{eq:one-step} to Equation \ref{eq:sum}, we never used any properties of the random variables $X_1,\ldots, X_n$ other than the fact that they take values in $\{0,1\}$. As a result, we can reverse the direction of all of the equalities to show that the above expression is equal to 
$$\mathbb{E}[\exp(\lambda\cdot f(X_1^*,X_2,\ldots,X_n))],$$
which completes the proof of the lemma.
\end{proof}

Now, we will complete the proof of our main result. Combining the theorem below with Claims \ref{lem:1-na} and \ref{lem:1-nr} immediately gives a proof of Theorem \ref{thm:intro}. Here, our proof will rely heavily on the proof of the Chernoff bound for submodular functions under independence given in \citet{focs/ChekuriVZ10}.

\begin{theorem}
    Let $X_1,\ldots,X_n$ be binary random variables with mean $x_1,\ldots,x_n$ satisfying 1-negative association. Let $f$ be a non-negative monotone submodular function with marginal values in $[0,1]$ and let $F$ be the multilinear extension of $f$. If we let, $\mu_0=F(x_1,\ldots,x_n)$, then we have the following:
    $$\Pr[f(X_1,\ldots,X_n)\le (1-\delta)\cdot \mu_0]\le\exp(-\mu_0\delta^2/2).$$\label{thm:main-concentration}
\end{theorem}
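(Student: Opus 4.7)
The plan is to use the standard exponential-moments (Chernoff) method, with Lemma \ref{lem:exp-mom} doing the essential work of reducing to the independent case. Specifically, fix any $\lambda < 0$ to be optimized later. By Markov's inequality applied to the non-negative random variable $\exp(\lambda f(X_1,\ldots,X_n))$ (noting that $\lambda < 0$ flips the tail direction), I would write
\[
\Pr[f(X_1,\ldots,X_n) \le (1-\delta)\mu_0] \;=\; \Pr\bigl[\exp(\lambda f(X_1,\ldots,X_n)) \ge \exp(\lambda(1-\delta)\mu_0)\bigr] \;\le\; \frac{\mathbb{E}[\exp(\lambda f(X_1,\ldots,X_n))]}{\exp(\lambda(1-\delta)\mu_0)}.
\]

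Next, I would invoke Lemma \ref{lem:exp-mom} (the submodular case with $\lambda < 0$) to replace the expectation on the right by its independent-variable analogue: letting $X_1^*,\ldots,X_n^*$ be independent Bernoullis with the same marginals $x_1,\ldots,x_n$, we have $\mathbb{E}[\exp(\lambda f(X_1,\ldots,X_n))] \le \mathbb{E}[\exp(\lambda f(X_1^*,\ldots,X_n^*))]$. At this point the problem is reduced to the fully independent case, which is precisely what Chekuri, Vondr\'ak and Zenklusen handled in \cite{focs/ChekuriVZ10}. Their argument (Lemma 3.2 and the ensuing Chernoff derivation in that paper) shows that for a non-negative monotone submodular $f$ with marginal values in $[0,1]$ and independent Bernoullis with marginals summing (via the multilinear extension) to $\mu_0 = F(x_1,\ldots,x_n)$, one has
\[
\mathbb{E}[\exp(\lambda f(X_1^*,\ldots,X_n^*))] \;\le\; \exp\bigl((e^{\lambda} - 1)\,\mu_0\bigr).
\]

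Combining these gives $\Pr[f(X) \le (1-\delta)\mu_0] \le \exp\bigl((e^{\lambda}-1)\mu_0 - \lambda(1-\delta)\mu_0\bigr)$. I would then optimize over $\lambda < 0$, choosing $\lambda = \ln(1-\delta)$ to obtain the classical Chernoff-type bound $\exp(-\mu_0 \delta^2 / 2)$ via the standard inequality $(1-\delta)^{1-\delta} \ge \exp(-\delta + \delta^2/2)$ for $\delta \in [0,1]$, exactly as in the independent Chernoff proof.

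The only non-routine step is the exponential-moment domination, and that is already established by Lemma \ref{lem:exp-mom}; everything else is a direct transcription of the standard Chernoff calculation and the independent-case bound from \cite{focs/ChekuriVZ10}. So the main work of the theorem lies entirely in the lemma, and the proof here is essentially a short three-line chain: Markov $\to$ Lemma \ref{lem:exp-mom} $\to$ apply the independent-case CVZ bound $\to$ optimize $\lambda$.
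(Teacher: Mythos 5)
Your proposal is correct and matches the paper's proof essentially step for step: Markov's inequality, then Lemma~\ref{lem:exp-mom} to dominate the moment generating function by the independent case, then the CVZ bound $\mathbb{E}[\exp(\lambda f(X^*))]\le\exp((e^\lambda-1)\mu_0)$, then optimize with $e^\lambda=1-\delta$. The only cosmetic difference is that the paper spells out the increment decomposition $f(X^*)=\sum_i Y_i^*$ and the convexity bound on $\mathbb{E}[\exp(\lambda Y_i^*)]$ before invoking CVZ's Lemma~C.1, whereas you black-box that whole block as the known CVZ result.
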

\begin{proof}
    Let $X_1^*,\ldots,X_n^*$ be independent random variables with the same respective marginals as $X_1,\ldots,X_n$ and let $\lambda<0$ be a parameter to be set later. Let us decompose $f(X_1^*,\ldots,X_n^*)=\sum_{i=1}^{n}Y_i^*$, where
    $$Y_i^*=f(X_1^*,\ldots,X_i^*,0,\ldots,0)-f(X_1^*,\ldots,X_{i-1}^*,0,\ldots,0).$$
    Let us denote $\mathbb{E}[Y_i^*]=\omega_i$ and $\mu_0=\sum_{i=1}^{n}\omega_i=\mathbb{E}[f(X_1^*,\ldots,X_n^*)]$. By the convexity of the exponential and the fact that $Y_i^*\in[0,1]$, we have that
    $$\mathbb{E}[\exp(\lambda\cdot Y_i^*)]\le \omega_i\cdot\exp(\lambda)+(1-\omega_i)=1+[\exp(\lambda)-1]\cdot\omega_i\le \exp[(\exp(\lambda)-1)\cdot\omega_i].$$
    Combining the above with Lemma C.1 from \citet{focs/ChekuriVZ10}, we have that
    \begin{align}
        \mathbb{E}[\exp(\lambda\cdot f(X_1^*,\ldots,X_n^*)]=\mathbb{E}[\exp(\lambda\cdot\textstyle\sum_{i=1}^{n}Y_i^*)]\le \textstyle\prod_{i=1}^{n}\mathbb{E}[\exp(\lambda\cdot Y_i^*)]\le \exp[(\exp(\lambda)-1)\cdot \mu_0].\label{eq:calc}
    \end{align}
    Now, we can follow the proof of the standard Chernoff bound:
    \begin{align*}
        \Pr[f(X_1,\ldots,X_n)\le(1-\delta)\cdot\mu_0]&=\Pr[\exp(\lambda\cdot f(X_1,\ldots,X_n))\le\textcolor{red}{\ge} \exp(\lambda(1-\delta)\cdot\mu_0)]\\
        &\le \frac{\mathbb{E}[\exp(\lambda\cdot f(X_1,\ldots,X_n))]}{\exp(\lambda(1-\delta)\cdot\mu_0)}\\
        &\le \frac{\mathbb{E}[\exp(\lambda\cdot f(X_1^*,\ldots,X_n^*))]}{\exp(\lambda(1-\delta)\cdot\mu_0)}\\
        &\le \frac{\exp[(\exp(\lambda)-1)\cdot\mu_0]}{\exp(\lambda(1-\delta)\cdot\mu_0)}
    \end{align*}
    The first equality follows since $\exp(\lambda\cdot x)$ is a monotone function, the first inequality follows by Markov's inequality, the second inequality follows by Lemma \ref{lem:1-na}, and the final inequality follows Equation \ref{eq:calc}.

    Finally, we can choose $\lambda$ such that $\exp(\lambda)=1-\delta$, which gives
    $$\Pr[f(X_1,\ldots,X_n)\ge (1-\delta)\mu_0]\le \frac{\exp(-\delta\mu_0)}{(1-\delta)^{(1-\delta)\mu_0}}\le \exp(-\mu_0\cdot \delta^2/2),$$
    where we used $(1-\delta)^{1-\delta}\le \exp(-\delta+\delta^2/2)$ for $\delta\in(0,1]$ in the final inequality.
\end{proof}

\subsection{Concentration of read-$k$ families}\label{sec:super}

In this subsection, we illustrate an application of our proof technique to give concentration for a read-$k$ family of supermodular functions. Read-$k$ families arise naturally in problems such as subgraph counting in random graphs, and can be seen as a complementary weak dependence notion to that of low-degree polynomials~\cite{combinatorica/KimV00}. Our work gives the first concentration results for these problems under negative dependence.

Let's consider this notion of weak dependence defined in \citet{rsa/GavinskyLSS15}. Let $Y_1,\ldots,Y_n$ be random variables and assume that they can be factored as functions of random variables $X_1,\ldots,X_m$. We say that $Y_1,\ldots,Y_n$ are a read-$k$ family of $X_1,\ldots,X_m$ if for each variable $X_i$, there are at most $k$ variables among $Y_1,\ldots, Y_n$ that are influenced by $X_i$. Formally, we have the following.

\begin{definition}
    Let $X_1,\ldots,X_m$ be random variables. For each $j\in[n]$, let $P_j\subseteq[m]$ and let $f_j:\{0,1\}^{P_j}\to [0,1]$ be functions of $X_{P_j}$. We say that $Y_j=f_j(X_{P_j})$ are a read-$k$ family if $|\{j:i\in P_j\}|\le k$ for each $i\in[m]$ (i.e., each variable $X_i$ influences at most $k$ functions).
\end{definition}

When $X_1,\ldots,X_m$ are independent, \citet{rsa/GavinskyLSS15} showed that we have
\begin{align}\label{eq: gav+}
&\Pr[\textstyle\sum_{j=1}^{n}f_j(X_{P_j})\ge (p+\epsilon)n]\le\exp(-D(p+\epsilon||p)\cdot n/k)
\\
\label{eq: gav-}
&\Pr[\textstyle\sum_{j=1}^{n}f_j(X_{P_j})\le (p-\epsilon)n]\le\exp(-D(p-\epsilon||p)\cdot n/k),
\end{align}
where $p=(1/n)\sum_{j=1}^{n}\mathbb{E}[Y_j]$ and $D(\cdot||\cdot)$ is the Kullback-Leibler divergence. Notably, while Gavinsky et al. do not require random variables $X_1,\ldots,X_m$ to be binary, as we do in our approach, their model requires $Y_1,\ldots,Y_n$ to be binary. We will show that Inequality \ref{eq: gav+} on the upper tail still holds for supermodular functions $f_1,\ldots,f_n$. Similarily, Inequality \ref{eq: gav-}, which addresses the lower tail bound, continues to apply to submodular functions $f_1,\ldots,f_n$.
\begin{theorem}\label{thm: supermodular}
    Let $X_1,\ldots,X_m$ be $1$-negatively associated random variables and let $X_1^*,\ldots,X_m^*$ be independent random variables with the same respective marginal distributions as $X_1,\ldots,X_m$. Suppose that $f_j(X_{P_j})$ for $j\in[n]$ are a read-$k$ family, where $f_j:\{0,1\}^{P_j}\to [0,1]$ are supermodular functions. If we let $p_0=(1/n)\sum_{j=1}^{n}\mathbb{E}[f_j(X^*_{P_j})]$ denote the averaged expectation when the underlying random variables are independent, we have
    $$\Pr[\textstyle\sum_{j=1}^{n}f_j(X_{P_j})\ge (p_0+\epsilon)n]\le\exp(-D(p_0+\epsilon||p_0)\cdot n/k).$$
\end{theorem}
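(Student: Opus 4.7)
My plan is to follow the standard exponential-moment method: first use Lemma \ref{lem:exp-mom} to pass from the 1-negatively associated variables to an independent system, then use a Shearer/H\"older-type inequality to exploit the read-$k$ structure under independence. By Markov's inequality, for any $\lambda>0$,
$$\Pr\left[\textstyle\sum_{j=1}^n f_j(X_{P_j}) \ge (p_0+\epsilon)n\right] \le \exp(-\lambda(p_0+\epsilon)n)\cdot \mathbb{E}\left[\exp\left(\lambda\textstyle\sum_j f_j(X_{P_j})\right)\right].$$
Because each $f_j$ (extended trivially from $\{0,1\}^{P_j}$ to $\{0,1\}^m$) is non-negative, monotone, and supermodular, so is $F := \sum_j f_j$. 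Hence the supermodular case of Lemma \ref{lem:exp-mom} applies at $\lambda>0$ and lets me replace the $X_i$'s by independent copies $X_i^*$:
$$\mathbb{E}[\exp(\lambda F(X))] \le \mathbb{E}[\exp(\lambda F(X^*))] = \mathbb{E}\left[\textstyle\prod_j \exp(\lambda f_j(X^*_{P_j}))\right].$$

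Next, I would establish the Shearer/H\"older-type bound
$$\mathbb{E}\left[\textstyle\prod_{j=1}^n \exp(\lambda f_j(X^*_{P_j}))\right] \le \textstyle\prod_{j=1}^n \mathbb{E}[\exp(k\lambda f_j(X^*_{P_j}))]^{1/k}.$$
The idea is to integrate the independent $X_i^*$'s out one at a time: at each step, at most $k$ of the factors depend on $X_i^*$, so ordinary H\"older with exponents $(k,\ldots,k)$ applies to those factors, raising their interior exponents from $\lambda$ to $k\lambda$ at the cost of a $1/k$-power outside. This can be formalized by induction on $m$ (or recognized as a special case of the Brascamp--Lieb/Shearer inequality), and is the step where the read-$k$ hypothesis is genuinely used.

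The remainder is the classical Chernoff--Cram\'er computation. Since $f_j\in[0,1]$, convexity of $x\mapsto\exp(k\lambda x)$ on $[0,1]$ gives $\mathbb{E}[\exp(k\lambda f_j(X^*_{P_j}))]\le 1-p_j+p_j\exp(k\lambda)$, where $p_j := \mathbb{E}[f_j(X^*_{P_j})]$. Using concavity of $p\mapsto\log(1-p+p\exp(k\lambda))$ together with Jensen's inequality, the product $\prod_j(1-p_j+p_j\exp(k\lambda))^{1/k}$ is bounded by $(1-p_0+p_0\exp(k\lambda))^{n/k}$. Writing $\mu = k\lambda$, the overall bound becomes $\bigl(\exp(-\mu(p_0+\epsilon))\,(1-p_0+p_0\exp(\mu))\bigr)^{n/k}$, which is exactly the classical Chernoff bound for a Bernoulli$(p_0)$ exceeding $p_0+\epsilon$, raised to the $n/k$ power. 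Optimizing over $\mu$ (the optimum being $\exp(\mu) = (p_0+\epsilon)(1-p_0)/(p_0(1-p_0-\epsilon))$) produces $\exp(-D(p_0+\epsilon||p_0))$, which yields the target bound. The main obstacle I anticipate is the H\"older/Shearer step: one must iterate carefully so the exponents stay balanced and the read-$k$ budget is exactly respected at each variable; the rest is a one-variable Chernoff computation.
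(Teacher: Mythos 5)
Your proposal follows the paper's proof essentially step for step: pass to independent copies via the supermodular case of Lemma~\ref{lem:exp-mom}, then apply the read-$k$ H\"older/Shearer inequality $\mathbb{E}[\prod_j F_j]\le\prod_j\mathbb{E}[F_j^k]^{1/k}$ (which the paper also proves by induction on the number of underlying independent variables), and finish with the one-variable Chernoff--Cram\'er optimization after the substitution $\mu=k\lambda$. The only difference is that you spell out the final Jensen/concavity step that moves from the individual $p_j$ to the average $p_0$, which the paper leaves implicit in the phrase ``follow the original proof of the Chernoff bound.''
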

\begin{proof}
    Let $f(X_1,\ldots,X_m)=\sum_{j=1}^{n}f_j(X_{P_j})$ be the quantity of interest, and note that $f$ is the sum of supermodular functions so it is supermodular as well.
    
    We will follow the standard proof via exponential moments. Let $\lambda>0$; we have
    \begin{align}
        \Pr[f(X_1,\ldots,X_m)\ge (p_0+\epsilon)n]&=\Pr[\exp(\lambda\cdot f(X_1,\ldots,X_m))\ge \exp(\lambda\cdot (p_0-\epsilon)n)]\label{eq:start}\\
        &\le \mathbb{E}[\exp(\lambda\cdot f(X_1,\ldots,X_m))]/\exp(\lambda\cdot(p_0-\epsilon)n),
    \end{align}
    where the inequality follows by Markov's. Since $f$ is supermodular, we have by Lemma \ref{lem:exp-mom} that
    \begin{align}
        \mathbb{E}[\exp(\lambda\cdot f(X_1,\ldots,X_m))]\le \mathbb{E}[\exp(\lambda\cdot f(X_1^*,\ldots,X_m^*))]= \mathbb{E}[\exp(\lambda\cdot\textstyle\sum_{j=1}^{n}f_j(X_{P_j}^*))].
    \end{align}
In the new proof of concentration of read-$k$ families, given in the Appendix, we show that
\begin{align}
    \mathbb{E}[\exp(\lambda\cdot\textstyle\sum_{j=1}^{n}f_j(X_{P_j}^*))]\le\left(\textstyle\prod_{j=1}^{n}\mathbb{E}[\exp(\lambda\cdot f_j(X^*_{P_j}))^k]\right)^{1/k}.\label{eq:end}
\end{align}
Combining equations \ref{eq:start}--\ref{eq:end}, we have
$$\Pr[f(X_1,\ldots,X_m)\ge(p_0+\epsilon)n]\le \left(\textstyle\prod_{j=1}^{n}\mathbb{E}[\exp(k\lambda\cdot f_j(X^*_{P_j}))/\exp(k\lambda(p_0+\epsilon)n]\right)^{1/k}.$$
Let $\lambda^\prime=k\lambda$; since $\lambda>0$ is a parameter we set, we can view $\lambda^\prime>0$ as a parameter as well. We will abuse notation and replace $\lambda^\prime$ with $\lambda$, so we have
$$\Pr[f(X_1,\ldots,X_m)\ge(p_0+\epsilon)n]\le \left(\textstyle\prod_{j=1}^{n}\mathbb{E}[\exp(\lambda\cdot f_j(X^*_{P_j}))/\exp(\lambda(p_0+\epsilon)n]\right)^{1/k},$$
for any $\lambda>0$.
Now, observe that the right hand side of the inequality is the exact same as in the proof of the standard Chernoff bound under independence, except with an additional exponent $1/k$. As a result, we can follow the original proof of the Chernoff bound to show that
$$\Pr[\textstyle\sum_{j=1}^{n}f_j(X_{P_j})\ge (p_0+\epsilon)n]\le\exp(-D(p_0+\epsilon||p_0)\cdot n/k),$$
which was our desired result.
\end{proof}

\begin{corollary}
    Let $X_1,\ldots,X_m$ be $1$-negatively associated random variables. Suppose that $f_j(X_{P_j})$ for $j\in[n]$ are a read-$k$ family, where $f_j:\{0,1\}^{P_j}\to [0,1]$ are submodular functions. If we let $p_0=(1/n)\sum_{j=1}^{n}\mathbb{E}[f_j(X^*_{P_j})]$ denote the averaged expectation when the underlying random variables are independent, we have
    $$\Pr[\textstyle\sum_{j=1}^{n}f_j(X_{P_j})\le (p_0-\epsilon)n]\le\exp(-D(p_0-\epsilon||p_0)\cdot n/k).$$
\end{corollary}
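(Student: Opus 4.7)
The plan is to mirror the proof of Theorem \ref{thm: supermodular} almost verbatim, swapping submodular for supermodular and reversing the sign of the Markov parameter. Let $f(X_1,\ldots,X_m)=\sum_{j=1}^{n}f_j(X_{P_j})$; since a sum of submodular functions is submodular, $f$ is submodular on $\{0,1\}^m$ and takes values in $[0,n]$. For the lower tail, I pick a parameter $\lambda<0$ (to be optimized at the end) and apply Markov's inequality to $\exp(\lambda f)$:
$$\Pr[f(X)\le (p_0-\epsilon)n]=\Pr[\exp(\lambda f(X))\ge \exp(\lambda(p_0-\epsilon)n)]\le \frac{\mathbb{E}[\exp(\lambda f(X))]}{\exp(\lambda(p_0-\epsilon)n)},$$
where the first equality uses that $x\mapsto \exp(\lambda x)$ is \emph{decreasing} for $\lambda<0$, flipping the direction of the event.

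The two remaining moves are direct quotations of the earlier machinery. Because $f$ is submodular and $\lambda<0$, Lemma \ref{lem:exp-mom} yields $\mathbb{E}[\exp(\lambda f(X))]\le \mathbb{E}[\exp(\lambda f(X^*))]$, reducing everything to the independent case. Next, the read-$k$ Shearer/H\"older-type factorization (\ref{eq:end}) derived in the appendix transfers unchanged: its proof only uses non-negativity of the random variables $\exp(\lambda f_j(X^*_{P_j}))$, which holds for any real $\lambda$. Hence
$$\mathbb{E}\Bigl[\exp\Bigl(\lambda\textstyle\sum_{j=1}^{n}f_j(X^*_{P_j})\Bigr)\Bigr]\le \Bigl(\textstyle\prod_{j=1}^{n}\mathbb{E}[\exp(k\lambda\cdot f_j(X^*_{P_j}))]\Bigr)^{1/k}.$$

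With the substitution $\lambda'=k\lambda$ (still negative), the resulting bound has the same form as the right-hand side of a standard lower-tail Chernoff derivation for a sum of independent $[0,1]$-valued random variables, up to the extra exponent $1/k$. I would then apply the convexity bound $\mathbb{E}[\exp(\lambda' f_j(X^*_{P_j}))]\le 1+(e^{\lambda'}-1)\mathbb{E}[f_j(X^*_{P_j})]\le \exp((e^{\lambda'}-1)\mathbb{E}[f_j(X^*_{P_j})])$ to every factor (valid for any real $\lambda'$ since $f_j$ takes values in $[0,1]$ and $1+x\le e^x$), collect terms using $p_0=(1/n)\sum_j \mathbb{E}[f_j(X^*_{P_j})]$, and finally optimize $\lambda'<0$ to turn the exponent into the Kullback--Leibler form $D(p_0-\epsilon\,\|\,p_0)\cdot n/k$.

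The main thing to be careful about is sign bookkeeping. The submodular clause of Lemma \ref{lem:exp-mom} requires $\lambda<0$, which is precisely the range dictated by the lower tail; the read-$k$ factorization (\ref{eq:end}) is insensitive to the sign of $\lambda$ since it rests only on Shearer/H\"older applied to non-negative factors; and the final Chernoff optimization for the lower tail is identical to the standard argument once the $1/k$-factorization is in place. No new ideas beyond those already present in the proof of Theorem \ref{thm: supermodular} should be needed.
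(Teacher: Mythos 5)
Your proposal is correct, but it takes a genuinely different route from the paper. You replay the entire Chernoff machinery from Theorem~\ref{thm: supermodular} directly on the submodular side: take $\lambda<0$, apply Markov to $\exp(\lambda f)$ (noting the event flips because $x\mapsto e^{\lambda x}$ is decreasing), invoke the \emph{submodular} clause of Lemma~\ref{lem:exp-mom} to pass to the independent $X^*$'s, and then reuse the read-$k$ H\"older factorization (\ref{eq:end}), correctly observing that it is sign-agnostic because it rests on non-negativity of the factors $\exp(\lambda f_j)$. The paper instead uses a one-line duality: set $g_j:=1-f_j$, which is supermodular and $[0,1]$-valued, feed the $g_j$ directly into Theorem~\ref{thm: supermodular}, and translate back via $\{\sum_j f_j\le (p_0-\epsilon)n\}=\{\sum_j g_j\ge(1-p_0+\epsilon)n\}$ together with the KL symmetry $D(1-p\,\|\,1-q)=D(p\,\|\,q)$. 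Your route is more mechanical and self-contained (it does not require knowing the complementation trick or the KL symmetry), while the paper's reduction is shorter and makes the submodular/supermodular duality explicit. One small caution in your sketch: to land exactly on the stated $D(p_0-\epsilon\,\|\,p_0)$ exponent you should stop at the linear MGF bound $\mathbb{E}[\exp(\lambda' f_j)]\le 1+(e^{\lambda'}-1)\mathbb{E}[f_j]$ and then combine the factors via the AM--GM inequality before optimizing; if you push on to the looser $\exp((e^{\lambda'}-1)\mathbb{E}[f_j])$ form you obtain a Poisson-type exponent rather than the exact relative-entropy form.
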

\begin{proof}
    Define $g_j:=1 - f_j$, where $g_j:\{0,1\}^{P_j}\to [0,1]$ are supermodular. Then $1 - p_0 = (1/n)\sum_{j=1}^{n}\mathbb{E}[g_j(X^*_{P_j})]$. Applying Theorem \ref{thm: supermodular}, we have:
    $$
    \Pr[\textstyle\sum_{j=1}^{n}g_j(X_{P_j})\ge (1-(p_0-\epsilon))n] \le
    (1-  p_0+\epsilon)n]\le\exp(-D(1-p_0+\epsilon||1 - p_0)\cdot n/k).
    $$
    The property of Kullback–Leibler divergence $D(1-p||1-q) = D(p||q)$ implies
    \begin{align*}
        \Pr[\textstyle\sum_{j=1}^{n}f_j(X_{P_j})\le (p_0-\epsilon)n] &= \Pr[\textstyle\sum_{j=1}^{n}g_j(X_{P_j})\ge (1-(p_0-\epsilon))n] \\
        &\le\exp(-D(p_0-\epsilon||p_0)\cdot n/k),
    \end{align*}
    which concludes the proof.
\end{proof}

\appendix
\section{Concentration of Read-$k$ Families}

We will give a new simpler proof of the results of Gavinsky et al.~\cite{rsa/GavinskyLSS15} using exponential moments for $f_j$ for independent random variables $X_1,\ldots,X_n$. Our proof will use the following lemma 
\begin{lemma}
    For an arbitrary read-$k$ family $F_1,\ldots,F_n$, we have
    \begin{align}
\mathbb{E}\left[\textstyle\prod_{j=1}^{n}F_j\right]\le \textstyle\left(\prod_{j=1}^{n}\mathbb{E}\left[F_j^k\right]\right)^{1/k}.\label{eq:1k-CH}
    \end{align}
\end{lemma}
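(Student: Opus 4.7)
The plan is to prove the lemma by induction on $m$, the number of underlying independent random variables $X_1, \ldots, X_m$. The base case $m = 0$ is immediate since all $F_j$ are then constants and both sides of the inequality equal $\prod_j F_j$. For the inductive step, I would peel off a single variable $X_m$ and exploit that it appears in at most $k$ of the $F_j$'s.

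Concretely, set $S := \{j \in [n] : m \in P_j\}$; by the read-$k$ assumption, $|S| \le k$. Conditioning on $X_m = x$, each $F_j$ with $j \in S$ reduces to a function $\tilde F_j^{(x)}$ of $X_1, \ldots, X_{m-1}$, while $F_j$ for $j \notin S$ is unchanged. The restricted family $\{\tilde F_j^{(x)}\}_{j\in[n]}$ is still a read-$k$ family on the remaining $m-1$ variables, so applying the inductive hypothesis conditionally yields $\mathbb{E}[\prod_j F_j \mid X_m = x] \le \bigl(\prod_{j \in S} \mathbb{E}[(\tilde F_j^{(x)})^k] \cdot \prod_{j \notin S} \mathbb{E}[F_j^k]\bigr)^{1/k}$. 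Taking expectation over $X_m$ and factoring out the $j \notin S$ terms (which do not depend on $X_m$) reduces the problem to bounding $\mathbb{E}_{X_m}\bigl[\prod_{j \in S} A_j(X_m)^{1/k}\bigr]$, where $A_j(X_m) := \mathbb{E}[(\tilde F_j^{(X_m)})^k]$.

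To close the induction, I would apply Hölder's inequality with $|S|$ equal factors, each raised to the power $|S|$ (so the reciprocals sum to $1$), obtaining $\mathbb{E}_{X_m}\bigl[\prod_{j \in S} A_j^{1/k}\bigr] \le \prod_{j \in S} \mathbb{E}_{X_m}\bigl[A_j^{|S|/k}\bigr]^{1/|S|}$. Because $|S|/k \le 1$, the function $x \mapsto x^{|S|/k}$ is concave on $[0,\infty)$, so Jensen's inequality gives $\mathbb{E}_{X_m}[A_j^{|S|/k}] \le \mathbb{E}_{X_m}[A_j]^{|S|/k}$; combined with the tower rule $\mathbb{E}_{X_m}[A_j] = \mathbb{E}[F_j^k]$ and collecting exponents, this yields $\mathbb{E}_{X_m}\bigl[\prod_{j \in S} A_j^{1/k}\bigr] \le \prod_{j \in S} \mathbb{E}[F_j^k]^{1/k}$. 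Multiplying by the factored-out $\prod_{j\notin S}\mathbb{E}[F_j^k]^{1/k}$ produces the claimed bound. The main obstacle is aligning the Hölder and Jensen exponents so that the $|S|$ from Hölder and the $1/k$ from the inductive step cancel cleanly, and this is exactly where the read-$k$ hypothesis is used: Jensen's inequality points in the correct direction precisely when $|S| \le k$, since otherwise $x \mapsto x^{|S|/k}$ would be convex and the argument would break.
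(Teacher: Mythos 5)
Your proof is correct and follows the same inductive scheme as the paper: induct on the number of underlying independent variables, peel off one variable by conditioning, use generalized H\"older to handle the at-most-$k$ factors that share that variable, and invoke the inductive hypothesis for the rest. The structural difference is the order of operations. The paper conditions on $X_2,\ldots,X_{m+1}$, applies its one-variable base case (H\"older) to $\prod_{j\in S}F_j$ first, then defines $G_j=\mathbb{E}[F_j^k\mid X_2,\ldots,X_{m+1}]^{1/k}$ for $j\in S$ and applies the inductive hypothesis to $\{G_j\}$; you instead condition on $X_m$, apply the inductive hypothesis to the full conditional product first, and then close with H\"older plus a power-mean (Jensen) step over $X_m$. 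Both are sound; your version makes it especially transparent that the read-$k$ hypothesis is used exactly at the concavity of $x\mapsto x^{|S|/k}$ for $|S|\le k$, whereas the paper folds the analogous $|S|\le k$ slack into the padding step of its $m=1$ base case (which likewise relies on $\|F\|_{|S|}\le\|F\|_k$). One small thing to note is that, as in the paper, the argument implicitly assumes $F_j\ge 0$ so that the fractional powers and H\"older are legitimate; this holds in the intended application $F_j=\exp(\lambda f_j)$.
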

Using this lemma, if we define 
$$F_j=\exp(\lambda\cdot f_j(X_1,\ldots,X_m)),$$
it is easy to see that $F_1,\ldots,F_n$ are a read-$k$ family since $f_1,\ldots,f_n$ are read-$k$. We will show later that after applying inequality (\ref{eq:1k-CH}) on $F_1,\ldots,F_m$, we can adapt the standard Chernoff bound proof to our case.

\begin{proof}
    We will prove inequality (\ref{eq:1k-CH}) via induction on the number of independent variables $m$. For the base case $m=1$, observe that there are at most $k$ non-constant functions $F_j$ by definition. If there are fewer than $k$ functions $F_j$, we can also add identity functions without changing the product. As a result, we can without loss of generality assume that there are exactly $k$ functions $F_j$. The inequality then follows directly by the Generalized H\"older's Inequality.

Now assume we have proven the statement for $m$ independent variables; we will try to prove it for $m+1$. Again, let $S=S_1$ denote the set of functions $F_j$ which are influenced by $X_1$. We have
\begin{align}             
    \mathbb{E}\left[\textstyle\prod_{j=1}^{n}F_j\right]&=\mathbb{E}\left[\textstyle\prod_{j\in S}F_j\cdot\prod_{j\not\in S}F_j\right]\nonumber\\
    &= \mathbb{E}\left[\mathbb{E}\left[\textstyle\prod_{j\in S}F_j\cdot\prod_{j\not\in S}F_j\big|X_2,\ldots, X_{m+1}\right]\right]\nonumber\\
    &= \mathbb{E}\left[\mathbb{E}\left[\textstyle\prod_{j\in S}F_j\big|X_2,\ldots, X_{m+1}\right]\cdot\textstyle\prod_{j\not\in S}F_j\right]\label{eq:idk}
\end{align}
Here, the first equality is obvious, the second equality follows by the law of total expectation, and the third equality follows since $F_j$ for $j\not\in S$ only depends on $X_2,\ldots, X_{m+1}$ and is independent of $X_1$. 

After taking the conditional expectation, observe that $\mathbb{E}[\prod_{j\in S}F_j|X_2,\ldots,X_{m+1}]$ is a random variable which only depends on $X_2,\ldots,X_{m+1}$. In particular, these form a read-$k$ family over $m$ random variables, so we can apply the inductive hypothesis to claim that
\begin{align*}
    \textstyle\mathbb{E}\left[\prod_{j\in S}F_j|X_2,\ldots,X_{m+1}\right]\le \left(\prod_{j\in S}\mathbb{E}\left[F_j^k|X_2,\ldots,X_{m+1}\right]\right)^{1/k}
\end{align*}
After combining this with equation (\ref{eq:idk}), we have
\begin{align*}
\textstyle\mathbb{E}\left[\prod_{j=1}^{n}F_j\right]\le \mathbb{E}\left[\prod_{j\in S}\mathbb{E}\left[F_j^k|X_2,\ldots,X_{m+1}\right]^{1/k}\cdot\prod_{j\not\in S}F_j\right]\eqqcolon A
\end{align*}
For $j\not\in S$, let $G_j=F_j$ and for $j\in S$, define 
$$G_j=\mathbb{E}\left[F_j^k|X_2,\ldots,X_{m+1}\right]^{1/k}$$
so that $A=\mathbb{E}\left[\prod_{j=1}^{n}G_j\right]$.
Observe that $G_j$ are again a read-$k$ family on $X_2,\ldots,X_{m+1}$, so we can again apply the induction hypothesis to claim that 
\begin{align*}
A=\mathbb{E}\left[\textstyle\prod_{j=1}^{n}G_j\right]&\le \textstyle\prod_{j=1}^{n}\mathbb{E}\left[G_j^k\right]^{1/k}=\textstyle\prod_{j=1}^{n}\mathbb{E}\left[F_j^k\right]^{1/k},
\end{align*}
where the final equality follows directly by definition of $G_j$. This completes the proof.
\end{proof}

Using the lemma, we can follow the standard Chernoff bound techniques to complete the proof. Let $X=\sum_{j=1}^{n}f_j(X_1,\ldots,X_m)$; we can apply Markov's inequality for any $\lambda<0$ to obtain
$$\Pr[X\le qn]=\Pr[\exp(\lambda X)\ge\exp(\lambda qn)]\le \exp(-\lambda qn)\cdot\mathbb{E}\left[\exp(\lambda X)\right]$$
As mentioned before, we can take 
$F_j=\exp(\lambda\cdot f_j(X_1,\ldots,X_m))$ and apply inequality (\ref{eq:1k-CH}) to obtain that
$$\exp(\lambda X)=\mathbb{E}\left[\textstyle\prod_{j=1}^{n}F_j\right]\le\textstyle\prod_{j=1}^{n}\mathbb{E}[F_j^k]^{1/k}.$$
Combining this inequality with the previous inequality, we now have that
$$\Pr[X\le qn]\le \exp(-\lambda qn)\cdot\textstyle\prod_{j=1}^{n}\mathbb{E}[F_j^k]^{1/k}=\left(\textstyle\prod_{j=1}^{n}\mathbb{E}\left[[F_j/\exp(\lambda q)]^k\right]\right)^{1/k}.$$
Writing out the definition of $F_j$, we have 
$$[F_j/\exp(\lambda q)]^k=\exp(\lambda k\cdot f_j(X_1,\ldots,X_m))/\exp(\lambda kq).$$
Define $\lambda^\prime=\lambda k$; since $\lambda$ was arbitrary, we will abuse notation and let $\lambda=\lambda^\prime$. Combining the two previous inequalities, we have that
$$\Pr[X\le qn]\le\left(\textstyle\prod_{j=1}^{n}\mathbb{E}[\exp(\lambda\cdot f_j)/\exp(\lambda q)]\right)^{1/k}.$$
Here, the right-hand side is in exactly the same form as in the proof of the Chernoff-Hoeffding theorem under independence, except with an additional exponent $1/k$. As a result, we can follow the Chernoff-Hoeffding proof and take $q=p-\epsilon$ to obtain
$$\Pr[X\le (p-\epsilon)n]\le \exp(-D(p-\epsilon||p)\cdot n/k),$$
which was our desired result.

\printbibliography
%
%
\end{document}